\documentclass[a4paper,UKenglish,cleveref, autoref, thm-restate,authorcolumns]{lipics-v2019}
\usepackage{dsfont}
\usepackage[ruled, linesnumbered]{algorithm2e}
\usepackage{graphicx}
\usepackage{mathtools}
\usepackage{amsfonts}
\usepackage{comment}
\usepackage{color}
\usepackage{pgfplots}
\usepackage{cite}
\pgfplotsset{width=5.6cm,compat=1.9,
    tick label style = {font = {\fontsize{8 pt}{12 pt}\selectfont}},
    label style = {font = {\fontsize{9 pt}{12 pt}\selectfont}},
    legend style = {font = {\fontsize{7 pt}{12 pt}\selectfont}},  
  }


\bibliographystyle{plainurl}

\title{Shortest Path Centrality and the APSP problem via VC-dimension and Rademacher Averages} 

\titlerunning{SP Centrality and the APSP problem via VC-dimension and Rademacher Averages} 

\author{$\,$}{$\,$}{}{}{} 

\author{Alane M. de Lima}{Department of Computer Science, Federal University of Paran\'{a}, Brazil \and \url{http://www.inf.ufpr.br/amlima} }{amlima@inf.ufpr.br}{https://orcid.org/0000-0003-4575-2401}{Supported by CAPES and CNPq.}

\author{Murilo V. G. da Silva}{Department of Computer Science, Federal University of Paran\'{a}, Brazil \and \url{http://www.inf.ufpr.br/murilo}}{murilo@inf.ufpr.br}{http://orcid.org/0000-0002-3392-714X}{}

\author{Andr\'{e} L. Vignatti}{Department of Computer Science, Federal University of Paran\'{a}, Brazil \and \url{http://www.inf.ufpr.br/vignatti}}{vignatti@inf.ufpr.br}{http://orcid.org/0000-0001-8268-5215}{}

\authorrunning{ }
\authorrunning{A.\,M. de Lima et al.} 
\Copyright{ }
\Copyright{Alane M. de Lima, Murilo V. G. da Silva and Andr\'{e} L. Vignatti} 

\ccsdesc[100]{Theory of computation}
\ccsdesc[100]{Graph algorithms analysis} 

\keywords{All-pairs Shortest Paths; Shortest Path Centrality; Sample Complexity} 

\category{} 

\relatedversion{} 

\supplement{}



\nolinenumbers 

\hideLIPIcs  


\begin{document}

\maketitle

\SetKwRepeat{Do}{do}{while}

\begin{abstract}
In this paper we are interested in a version of the All-pairs Shortest Paths problem (APSP) that fits neither in the exact nor in the approximate case. We define a measure of centrality of a shortest path, related to the ``importance'' of such shortest path in the graph, and propose an algorithm based on the idea of progressive sampling that, for {\it any fixed constants} $0 < \epsilon$, $ \delta < 1$, given an undirected graph $G$ with non-negative edge weights, outputs with probability $1 - \delta$ a data structure of size $n \cdot \textrm{Diam}_V(G)$, where $\textrm{Diam}_V(G)$ is the vertex diameter of $G$, in expected time $\mathcal{O}(\lg n \max(m + n \log n, n \cdot \textrm{Diam}_V(G)))$ containing the (exact) distance and the shortest path between every pair of vertices $(u,v)$ that has centrality at least $\epsilon$. The progressive sampling technique is sensitive to the probability distribution of the input (if we assume that $G$ is chosen from a prescribed random distribution), but even in the case where we take no assumption about such distribution, we show an upper bound for the sample size using VC-dimension theory that is tighter than the bound given by standard Hoeffding and union bounds, since VC-dimension theory captures the combinatorial structure of the input graph.
\end{abstract}


\section{Introduction}

The All-pairs Shortest Path (APSP) is the problem of computing a path with the minimum length between every pair of vertices in a weighted graph. 
The APSP problem is very well studied and there has been recent results for a variety of assumptions for the input graph (directed/undirected, integer/real edge weights, etc) \cite{williams2014, chan2012, eirinakis2017, brodnik2017}. In this paper we assume that the input is an undirected graph $G$ with $n$ vertices and $m$ edges with non-negative weights.

In our scenario, the fastest known exact algorithms are the algorithm proposed by Williams (2014) \cite{williams2014}, which runs in $\mathcal{O}\left(\frac{n^3}{2^{c \sqrt{\log n}}}\right)$ time, for some constant $c > 0$, and by Pettie and Ramachandram (2002) \cite{pettie2002} for the case of sparse graphs, which runs in $\mathcal{O}(mn \log \alpha(m,n))$ time, where $\alpha(m,n)$ is the Tarjan's inverse-Ackermann function. If no assumption is taken about the sparsity of the graph, it is an open question whether the APSP problem can be solved in $\mathcal{O}(n^{3-c})$, for any $c > 0$ even when the edge weights are natural numbers. The hypothesis that there is no such algorithm for such task is used in hardness arguments in some works\cite{abboudwilliams,abboudwilliams2}. 

The three fastest approximation algorithms for the problem depend on the approximation guarantees as well as the sparsity of the input graph. Elkin et al. (2019) \cite{elkin2019} proposed an approximation that runs in $\mathcal{O}(n^2)$ time and has multiplicative factor $1+\epsilon$ and an additive term $\beta(G)$, where $\beta(G)$ depends on the edge weights. Baswana and Kavitha (2010) \cite{baswana2010} proposed two algorithms, one that runs in $\tilde{\mathcal{O}}(m\sqrt{n} + n^2)$ time and other that runs in $\tilde{\mathcal{O}}(m^{2/3}n + n^2)$ time, depending on the required approximation factor. There has also been recent development on approximation algorithms when assumptions for the input graph are not the same as ours \cite{yuster2012, bringman2019}.

In this paper, we deal with a version of the problem that fits neither in the exact nor in the approximate case. For every pair of vertices $u,v$ our algorithm either outputs a shortest path between $u$ and $v$ of exact size or it does not compute any shortest path between $u$ and $v$, depending on a certain centrality measure of the shortest path in question. The precise definition of centrality is given in Section 2.1.
Let $\textrm{Diam}_V(G)$ be the vertex-diameter of the input graph (i.e., the maximum number of vertices in a shortest path in $G$). In this paper we present a $\mathcal{O}(\lg n \max(m + n \log n, n \cdot \textrm{Diam}_V(G)))$ expected running time randomized algorithm for computing the shortest path between every pair of vertices that has shortest path centrality higher than certain fixed constant. This is particularly interesting for sparse graphs with logarithmic vertex diameter (this is the case for many real word graphs such as power law graphs), where the algorithm runs in $\mathcal{O}(n \log^2 n)$ expected time. The central idea is sampling shortest paths trees (such as trees computed by Dijkstra's Algorithm) so that every shortest path in $G$ with the desired centrality is covered with high probability. 

The main contribution of this paper is an analysis relying on the use of Rademacher Averages in a progressive sampling approach to build an algorithm that iteratively increases the sample size until the desired accuracy is achieved. The number of steps in the progressive sampling technique is sensitive to the probability distribution of the input graph (if we assume that the input is sampled according to a certain distribution). However, even if we make no assumption on the input graph, we use the Vapnik-Chervonenkis (VC) theory to give an upper bound for the sample size that is tighter than bounds given by standard Hoeffding and union-bound techniques. This upper bound is tighter since VC dimension theory captures the combinatorial structure of the input graph and this bound for such graph can be computed efficiently. More precisely, we show that sampling $\lceil \frac{c}{\epsilon} (\lfloor \lg \textrm{Diam}_V(G) + \lg n + 2 \rfloor) \ln \frac{1}{\epsilon} + \ln \frac{1}{\delta} \rceil$ vertices and using them as the root of a shortest path trees is enough to compute, with probability $1-\delta$, all shortest paths with centrality at least $\epsilon$, where $c$ is a constant around $\frac{1}{2}$. 

Some of the techniques used in this paper were developed by Riondato and Kornaropoulos (2016) and Riondato and Upfal (2018) \cite{Riondato2016, RiondatoUpfal}. In their work, the authors use VC-dimension theory, the $\epsilon$-sample theorem and Rademacher averages for the estimation of betweenness centrality in a graph. More recently Lima et al. (2019) \cite{lima2019} used some of these tools for the estimation of the percolation centrality using pseudo-dimension theory. The algorithms in \cite{Riondato2016,RiondatoUpfal,lima2019} use a range space where intervals are vertices of $G$ and the points in sample space are the shortest paths of $G$. A main difference in the approach in the present paper compared to the previous works is that here we show that a ``flipped'' version of the range space (now intervals are shortest paths and points are vertices) can be used for solving the version of the APSP that we have at hand. 

In fact, we show two different algorithms in this paper. The first algorithm outputs with probability $1-\delta$ an estimation for the centrality $c(u,v)$ within $\epsilon$ of the optimal value, for any fixed constants $0 < \epsilon, \delta < 1$. The second algorithm outputs with probability $1-\delta$ a shortest path between $u$ and $v$ if $c(u,v)$ is at least $\epsilon$. Both algorithms run in $\mathcal{O}(\lg n \max(m + n \log n, n \cdot \textrm{Diam}_V(G)))$ expected running time, even though the constant for the second algorithm is smaller.
These constants represent a worst case analysis for the progressive sampling technique, but even in this scenario they are both smaller than the constants obtained by similar sampling algorithms using Hoeffding and union-bound techniques.

\section{Preliminaries}
\label{sec:preliminaries}

The definitions, notation and results which are the theoretical foundation of our work are presented below.

\subsection{Shortest Paths in Graphs} \label{subsec:graphs}

Let $G = (V,E)$ be an undirected graph and a function $w : E \rightarrow \mathbb{R}^+$, where $w(e)$ is the non-negative \emph{weight} of the edge $e$. W.l.o.g. we assume that $G$ is connected, since all results in this paper can be applied to the connected components of a graph, when a graph is disconnected.  A \emph{path} is a sequence of vertices $p = (v_1, v_2, \ldots, v_k)$ such that, for $1 \leq i < k$, $v_i \neq v_{i+1}$ and there is $(v_i,v_{i+1}) \in E$. Let $E_p$ be the set of edges of a path $p$. The length of $p$, denoted by $l(p)$, corresponds to the sum $\sum_{e \in E_P} w(e)$. For a pair $(u,v) \in V^2$, let $P_{uv}$ be the set of all paths from $u$ to $v$. A \emph{shortest path} is a path $p_{uv} \in P_{uv}$ where  $l(p_{uv}) = \min\{l(p_{u'v'}) : p_{u'v'} \in P_{uv}\}$. The length of a shortest path is called \emph{distance}.

A \emph{shortest paths tree (SPT)} of a vertex $u$ is a spanning tree $T_u$ of $G$ such that the path from $u$ to every other vertex of $T_u$ is a shortest path in $G$. Note that there might be many SPTs for a given vertex. In this paper we are interested in fixing one canonical SPT for every vertex of $G$. More precisely, we fix an (arbitrary) ordering of the vertex set $V$ and let the canonical SPT for a vertex $u$ be the SPT output by Dijkstra's algorithm and denote such tree $T_u$. We also call $T_u$ the \emph{Dijkstra tree} of $u$. Let $p_{uv}$ be a shortest path from the root $u$ to $v$ in the tree $T_u$. Then every subpath of $p_{uv}$ is also a shortest path in $G$. We denote such set of subpaths (including $p_{uv}$) as $S(p_{uv})$. Since $G$ is undirected, the same applies for paths in reverse order, i.e., every subpath of $p_{vu}$ in $T_u$ is also a shortest path. Let $S(p_{vu})$ be such set of shortest paths.


Note that there are exactly $n$ Dijkstra trees for $G$ since Dijkstra's algorithm is deterministic and we have a fixed ordering for $V$. The set of $n$ Dijkstra trees of $G$ is denoted by $\mathcal{T}$.  Let $S(T_u) = \bigcup_{v \in V \setminus \{u\}} (S(p_{uv}) \cup S(p_{vu}))$.  The canonical set of shortest paths of $G$ (w.r.t. the ordering) is $S(G) = \bigcup_{u \in V} S(T_{u})$.   For the sake of convenience in Definition \ref{def:lengthshortpath} we present the length of a shortest path (distance) between a pair $(u,v) \in V^2$ in terms of Dijkstra trees.  

\begin{definition}[Distance]\label{def:lengthshortpath}
    Given a graph $G = (V,E)$, a function $w : E \rightarrow \mathbb{R}^+$, a pair $(u,v) \in V^2$, a vertex $x$ and a Dijkstra tree $T_x$, the \emph{distance} from $u$ to $v$ is defined as
    
    $$d(p_{uv}) = \sum_{e \in E_{T_x}} w(e)$$
    where $p_{uv} \in S(T_x)$ and $E_{T_x}$ is the set of edges of $p_{uv}$ in $T_x$.
\end{definition}

We define below the \emph{shortest path centrality} as the proportion of Dijkstra trees that contains some shortest path from $u$ to $v$ among all $n$ Dijkstra trees.

\begin{definition}[Shortest Path Centrality]\label{def:shortestpathcentrality} Given an undirected weighted graph $G = (V,E)$ with $n = |V|$, a pair $(u,v) \in V^2$ and the Dijkstra tree $T_x$ for each $x \in V$, let $p_{uv} = (u,\ldots,v)$ be a shortest path from $u$ to $v$ such that $p_{uv} \in S(G)$. The \emph{shortest path centrality} of a pair $(u,v) \in V^2$ is defined as
$c(u,v) = \frac{t_{uv}}{n}$
where $t_{uv} = \sum_{T_x \in \mathcal{T}} \mathds{1}_{\tau_{uv}}(T_x)$ and $\tau_{uv} = \{T_x \in \mathcal{T} : p_{uv} \in S(T_x)\}$.
\end{definition}

The function $\mathds{1}_{\tau_{uv}}(T_x)$ returns 1 if there is some shortest path from $u$ to $v$ in $T_x$ (and 0 otherwise). Clearly the centrality measure depends on the fixed ordering, however, since we are dealing random sampling note that such ordering is irrelevant. 


\subsection{Sample Complexity and VC-dimension}\label{subsec:samplecomp}

In sampling algorithms, typically the aim is the estimation of a certain quantity according to given parameters of quality and confidence using a random sample of size as small as possible. A central concept in sample complexity theory is the Vapnik-Chervonenkis Theory (VC-Dimension), in particular, the idea of finding an upper bound for the VC-dimension of a class of binary functions related to the sampling problem at hand. In our context, for instance, we may consider a binary function that takes a Dijkstra tree and outputs 1 if such tree contains a shortest path for a given set. Generally speaking, from the upper bound for the given class of binary functions we can derive an upper bound to the sample size for the sampling algorithm. 

We present in this section the main definitions and results from sample complexity theory used in this paper. An in-depth exposition of the Vapnik-Chervonenkis Theory (VC-Dimension), the $\epsilon$-sample and the $\epsilon$-net theorems can be found in the books of Shalev-Schwartz and Ben-David (2014) \cite{ShalevShwartz2014}, Mitzenmacher and Upfal (2017) \cite{Mitzenmacher2017}, Anthony and Bartlett (2009) \cite{Anthony2009}, and Mohri et al. (2012) \cite{Mohri2012}. 

\begin{definition}[Range Space]
    A \emph{range space} is a pair $\mathcal{R} = (U, \mathcal{I})$, where $U$ is a domain (finite or infinite) and $\mathcal{I}$ is a collection of subsets of $U$, called \emph{ranges}.
\end{definition}

 For a given $S \subseteq U$, the \emph{projection} of $\mathcal{I}$ on $S$ is the set $\mathcal{I}_S = \{S \cap I : I \in \mathcal{I}\}$. If $|\mathcal{I}_S| = 2^{|S|}$ then we say $S$ is \emph{shattered} by $\mathcal{I}$. The VC-dimension of a range space is the size of the largest subset $S$ that can be shattered by $\mathcal{I}$, i.e., 

\begin{definition}[VC-dimension]\label{def:vcdim}
    The VC-dimension of a range space $\mathcal{R} = (U,\mathcal{I})$, denoted by $\textrm{VCDim}(\mathcal{R})$, is 
    $\textrm{VCDim}(\mathcal{R}) = \max\{k : \exists S \subseteq U \text{ such that } |S| = k \text{ and } |\mathcal{I}_S| = 2^k\}$.
\end{definition}

The following combinatorial object, called $\epsilon$-net,
is useful when one wants to find a sample $S \subseteq U$ that intersects every range in $\mathcal{I}$ of a sufficient size. 

\begin{definition}[$\epsilon$-net]\label{enet}
    Given $0 < \epsilon < 1$, a set $S$ is called $\epsilon$-net w.r.t. a range space $\mathcal{R} = (U, \mathcal{I})$ and a probability distribution $\pi$ on $U$ if 
    \[ \forall I \in \mathcal{I}, \quad \Pr_\pi(I) \geq \epsilon \Rightarrow |I \cap S| \geq 1. \]
\end{definition}

The definition of $\epsilon$-sample is a stronger notion as it not only intersects ranges of a sufficient size but it also guarantees the right relative frequency of each range in $\mathcal{I}$ within the sample $S$. 

\begin{definition}[$\epsilon$-sample]\label{esample}
    Given $0 < \epsilon < 1$, a set $S$ is called $\epsilon$-sample w.r.t. a range space $\mathcal{R} = (U, \mathcal{I})$ and a probability distribution $\pi$ on $U$ if 
    \[ \forall I \in \mathcal{I}, \quad \left| \Pr_\pi(I) - \frac{|S \cap I|}{|S|} \right| \leq \epsilon. \]
\end{definition}

A more general definition of $\epsilon$-sample (called $\epsilon$-representative) is given below for the context where for a given a domain $U$ and a set of values of interest $\mathcal{H}$, there is a family of functions $\mathcal{F}$ from $U$ to $\mathbb{R}^*$ such that there is one $f_h \in \mathcal{F}$ for each $h \in \mathcal{H}$. Let $S$ be a collection of $r$ elements from $U$ sampled with respect to a probability distribution $\pi$. 

\begin{definition}\label{def:averages}
    For each $f_h \in \mathcal{F}$, such that $h \in \mathcal{H}$, we define the expectation of $f_h$ and its empirical average as $L_U$ and $L_S$, respectively, i.e., 
\[L_U(f_h) = \mathbb{E}_{u \in U} [f_h(u)] \quad \text{and} \quad L_S(f_h) = \frac{1}{r} \sum_{s \in S} f_h(s).\]
\end{definition}

\begin{definition}\label{def:erepresentative}
    Given $0 < \epsilon,\delta < 1$, a set $S$ is called \emph{$\epsilon$-representative} w.r.t. some domain $U$, a set $\mathcal{H}$, a family of functions $\mathcal{F}$ and a probability distribution $\pi$ if $|L_S(f_h) - L_U(f_h)| \leq \epsilon$, $\forall f_h \in \mathcal{F}.$
\end{definition}


The expectation of the empirical average $L_S(f_h)$ corresponds to $L_U(f_h)$, by linearity of expectation, therefore $|L_S(f_h) - L_U(f_h)| = |L_S(f_h) - \mathbb{E}_{f_h \in \mathcal{F}}[L_S(f_h)]|$. By the \emph{law of large numbers}, $L_S(f_h)$ converges to its true expectation as $r$ goes to infinity, since $L_S(f_h)$ is the empirical average of $r$ random variables sampled independently and identically w.r.t. $\pi$. Since this law provides no information about the value $|L_S(f_h) - L_U(f_h)|$ for any sample size, we use results from VC-dimension theory, which provide bounds on the size of the sample that guarantees that the maximum deviation of $|L_S(f_h) - L_U(f_h)|$ is within $\epsilon$ with probability at least $1-\delta$, for given $0 < \epsilon,\delta < 1$.




An upper bound to the VC-dimension of a range space allows to build an $\epsilon$-net and an $\epsilon$-representative sample, as stated in Theorem \ref{teo:esample}.

\begin{theorem}[see \cite{har2011}, Theorem 2.12] \label{teo:esample}
    Given $0 < \epsilon,\delta < 1$, let $\mathcal{R} = (U, \mathcal{I})$ be a range space 
    with $\textrm{VCDim}(\mathcal{R}) \leq k$, a probability distribution $\pi$ on the domain $U$ and let $c$ be a universal positive constant. 
    \begin{enumerate}
        \item A collection of elements $S \subseteq U$ sampled w.r.t. $\pi$ with $|S| = \frac{c}{\epsilon^2} \left( k + \ln \frac{1}{\delta} \right)$ is $\epsilon$-representative with probability at least $1-\delta$.
        
        \item A collection of elements $S \subseteq U$ sampled w.r.t. $\pi$ with $|S| = \frac{c}{\epsilon} \left( k \ln \frac{1}{\epsilon} + \ln \frac{1}{\delta} \right)$ is an $\epsilon$-net with probability at least $1-\delta$.
    \end{enumerate}
\end{theorem}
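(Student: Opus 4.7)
My plan is to give a unified proof of both parts using the classical symmetrization (``ghost sample'') argument combined with the Sauer-Shelah lemma, and then to note where a sharper technique is needed to remove a logarithmic factor in part 1. Concretely, I would introduce an independent ghost sample $S' \subseteq U$ of size $r = |S|$ drawn from $\pi$, and seek to bound the probability that some range $I \in \mathcal{I}$ witnesses a large deviation between $|S \cap I|/r$ and $\Pr_\pi(I)$.

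First I would perform the reduction step. Assuming $r$ is at least a small constant times $1/\epsilon^2$ for part 1 (respectively $1/\epsilon$ for part 2), Chebyshev's inequality shows that for any fixed range $I$, the ghost empirical frequency $|S' \cap I|/r$ is close to $\Pr_\pi(I)$ with probability at least $1/2$. It then follows that the failure probability of $S$ alone is at most twice the probability of the ``joint failure'' event that $S$ and $S'$ disagree by more than $\epsilon/2$ on some range. For the $\epsilon$-net case, the analogous reduction shows that if some $I$ with $\Pr_\pi(I) \geq \epsilon$ is missed by $S$, then $|S' \cap I| \geq \epsilon r / 2$ with probability at least $1/2$.

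Next I would apply symmetrization. Conditioning on the multiset $S \cup S'$ of size $2r$, the joint law of $(S,S')$ is invariant under independently swapping each of the $r$ matched pairs. The event that \emph{some} range witnesses a large discrepancy therefore reduces, via a union bound, to an event over the projected range family $\mathcal{I}_{S \cup S'}$, whose cardinality is at most $\sum_{i=0}^{k} \binom{2r}{i} \leq (2er/k)^{k}$ by the Sauer-Shelah lemma. For each fixed projected range, the random swap induces a Hoeffding-type tail bound of the form $\exp(-c \epsilon^{2} r)$ for part 1 and $\exp(-c \epsilon r)$ for part 2. Solving
\begin{equation*}
(2er/k)^{k} \exp(-c \epsilon r) \leq \delta
\end{equation*}
for $r$ yields exactly the $\epsilon$-net bound $r = \Theta(\epsilon^{-1}(k \ln(1/\epsilon) + \ln(1/\delta)))$ stated in part 2.

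The main obstacle lies in part 1. The naive version of the above calculation for the $\epsilon$-representative case gives $r = \Theta(\epsilon^{-2}(k \ln(1/\epsilon) + \ln(1/\delta)))$, which carries an extra $\ln(1/\epsilon)$ factor over the stated bound. Removing it requires a more refined analysis: either chaining over a sequence of nets at geometrically decreasing scales, or replacing the crude Sauer-Shelah union bound by a Rademacher-average argument (bounding the supremum of the empirical process by the expected Rademacher complexity, then controlling the latter via Massart's finite-class lemma with the growth function). A secondary subtlety throughout is ensuring that the bootstrapping assumption on $r$ used in the Chebyshev reduction is consistent with the final bound; this can be absorbed into the universal constant $c$ once the main estimates are in place.
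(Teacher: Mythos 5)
First, note that the paper does not actually prove this statement: it is imported as a black box from the literature (the classical $\epsilon$-sample theorem of Vapnik--Chervonenkis/Talagrand/Li--Long--Srinivasan together with the $\epsilon$-net theorem of Haussler--Welzl, cited as Theorem 2.12 of \cite{har2011}), so there is no in-paper argument to compare yours against. Your outline for part 2 is the standard and correct route: ghost sample, Chebyshev reduction, symmetrization by random pair swaps, the Sauer--Shelah bound $|\mathcal{I}_{S\cup S'}|\leq (2er/k)^k$, a per-projected-range tail of $\exp(-c\epsilon r)$, and solving for $r$. That part is sound in outline.

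The gap is in part 1, and you partially diagnose it yourself: the double-sampling argument only yields $r=\Theta(\epsilon^{-2}(k\ln(1/\epsilon)+\ln(1/\delta)))$, and the stated bound without the $\ln(1/\epsilon)$ factor is a genuinely harder theorem, not something absorbed into the universal constant. Moreover, one of the two repairs you propose does not work: bounding the empirical Rademacher average via Massart's finite-class lemma applied to the growth function gives a bound of order $\sqrt{k\ln(2er/k)/r}$, and setting this at most $\epsilon$ reproduces exactly the $k\ln(1/\epsilon)$ term you are trying to eliminate. The logarithmic factor is removed only by chaining (Dudley's entropy integral) combined with Haussler's packing lemma, which bounds the $\gamma$-packing number of the projected range family by $O(k)(2e/\gamma)^{k}$ independently of $r$, or by the direct refined argument of Li, Long and Srinivasan. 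Since you neither carry out the chaining argument nor invoke a result that supplies it, part 1 remains unproved as written; for the purposes of this paper that is acceptable (the theorem is only cited), but as a self-contained proof attempt it is incomplete, and the Massart-based alternative should be struck.
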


As pointed by Löffler and Phillips (2009) \cite{loffler2009shape}, $c$ is around $\frac{1}{2}$, but in this paper we leave $c$ as an unspecified constant. 




\subsection{Progressive Sampling and Rademacher Averages} \label{subsec:rademacher}

Finding a bound to the sample size that is tight may be a complicated task depending on the problem. Hence, making use of progressive sampling, in which the process starts with a small sample size which progressively increases until the accuracy improves, becomes an alternative to this issue \cite{provost1999}. The combination of an appropriate scheduling for the sample increase with an efficient-to-evaluate stopping condition (i.e., knowing when the sample is large enough) leads to a greater improvement in time for the estimation of the value of interest \cite{RiondatoUpfal}.
A key ideia is that the stopping condition takes into consideration the input distribution, which can be extracted by the use of Rademacher Averages (\cite{Mitzenmacher2017}, chapter 14). This theory lies in the core of statistical learning theory, although their applications extend the context of learning algorithms \cite{riondato2015}. 

Consider the computation of the maximum deviation of $L_S(f_h)$ from the true expectation of $f_h$, for all $f_h \in \mathcal{F}$, that is, $\sup_{f_h \in \mathcal{F}} |L_S(f_h) - L_U(f_h)|$. The \emph{empirical Rademacher average} of $\mathcal{F}$ is defined as follows.

\begin{definition}\label{empiricalrademacher}
    Consider a sample $S = \{z_1,\ldots,z_r\}$ and a distribution of $r$ Rade\-macher random variables $\sigma = (\sigma_1,\ldots,\sigma_r)$, i.e., $\Pr(\sigma_i = 1) = \Pr(\sigma_i = -1) = 1/2$ for $1 \leq i \leq r$. The empirical Rademacher average of a family of functions $\mathcal{F}$ w.r.t. to $S$ is defined as
    
    $$\tilde{R}_r(\mathcal{F},S) = \mathbb{E}_\sigma \left[ \sup_{f_h \in \mathcal{F}} \frac{1}{r} \sum_{i=1}^r \sigma_i f_h(z_i) \right].$$
\end{definition}

In this work, we use the bound previously introduced by Riondato and Upfal \cite{RiondatoUpfal} for the connection of the empirical Rademacher average with the value of $\sup_{f_h \in \mathcal{F}} |L_S(f_h) - L_U(f_h)|$, which extended the bound of Oneto et al. \cite{oneto2013} to the supremum of its absolute value to functions with codomain in $[0,1]$.

\begin{theorem}\label{theo:rademacherbound}
    With probability at least $1-\delta$,
    
    $$\sup_{f_h \in \mathcal{F}} |L_S(f_h) - L_U(f_h)| \leq 2\tilde{R}_r(\mathcal{F},S) + \frac{\ln \frac{3}{\delta} + \sqrt{(\ln \frac{3}{\delta} + 4r \tilde{R}_r(\mathcal{F},S)) \ln \frac{3}{\delta}}}{r} + \sqrt{\frac{\frac{3}{\delta}}{2r}}.$$
    
\end{theorem}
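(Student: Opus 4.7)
The plan is to follow the standard three-step scheme for deriving data-dependent Rademacher bounds. Writing $\Phi(S) := \sup_{f_h \in \mathcal{F}} |L_S(f_h) - L_U(f_h)|$ and $R_r(\mathcal{F}) := \mathbb{E}_S[\tilde{R}_r(\mathcal{F},S)]$, I would chain the inequalities
\[ \Phi(S) \;\leq\; \mathbb{E}_S[\Phi(S)] + A \;\leq\; 2R_r(\mathcal{F}) + A \;\leq\; 2\tilde{R}_r(\mathcal{F},S) + 2C + A, \]
where $A$ is a McDiarmid deviation (producing the last additive piece of the statement) and $2C$ is a Bernstein-type deviation (producing the middle piece). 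Each of the two probabilistic steps is allotted failure probability $\delta/3$; the factor $3$ inside the logarithms and a final union bound are precisely what produce the $\ln(3/\delta)$ terms and the overall confidence $1-\delta$.

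For the first step I would apply McDiarmid's bounded-differences inequality to $\Phi$: since every $f_h$ takes values in $[0,1]$, replacing one of the $r$ sample points changes each $L_S(f_h)$, and hence $\Phi(S)$, by at most $1/r$, so McDiarmid yields $A = \sqrt{\ln(3/\delta)/(2r)}$ with probability at least $1-\delta/3$. The middle step is the classical symmetrization inequality, extended to the absolute-value supremum by splitting $|L_S(f_h)-L_U(f_h)|$ into its positive and negative parts, running the usual ghost-sample/Rademacher-sign argument on each, and combining via the triangle inequality; this introduces the factor $2$ but costs no probability. For the third step I would apply an empirical Bernstein (Bousquet-type) inequality to $\tilde{R}_r$, exploiting the self-bounding property that $r\tilde{R}_r$ is a supremum of a sum of $\{-1/r,+1/r\}$-valued Rademacher-weighted terms whose variance is controlled by their mean. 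Solving the resulting implicit inequality for $R_r(\mathcal{F})$ yields, with probability at least $1-\delta/3$,
\[ R_r(\mathcal{F}) \;\leq\; \tilde{R}_r(\mathcal{F},S) + \frac{\ln(3/\delta) + \sqrt{(\ln(3/\delta) + 4r\tilde{R}_r(\mathcal{F},S))\ln(3/\delta)}}{2r}, \]
so that $2C$ matches the middle summand of the statement exactly.

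The main obstacle is the third step: producing the precise Bernstein-type expression with variance controlled by $\tilde{R}_r$ itself, rather than the cruder $\mathcal{O}(\sqrt{\ln(1/\delta)/r})$ term one would obtain from a second application of McDiarmid. The key technical input is Bousquet's inequality for suprema of empirical processes applied to the symmetrized (Rademacher-weighted) sum, combined with the self-bounding behaviour of $\tilde{R}_r$; this is exactly where the Riondato--Upfal strengthening of the Oneto et al.\ bound is needed, because both tails of $\Phi$ must be handled inside the same nested-square-root expression rather than only the upper one. Combining the three displays above and taking a union bound over the two probabilistic events then gives the claim.
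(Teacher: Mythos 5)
The paper does not actually prove this statement: Theorem~\ref{theo:rademacherbound} is imported verbatim from Riondato and Upfal \cite{RiondatoUpfal}, who extend the bound of Oneto et al.\ \cite{oneto2013} to the supremum of the absolute deviation for $[0,1]$-valued functions. Your three-step reconstruction --- McDiarmid on $\sup_{f_h}|L_S(f_h)-L_U(f_h)|$ with bounded differences $1/r$, symmetrization at the level of expectations, and a self-bounding/Bousquet-type concentration of $\tilde{R}_r(\mathcal{F},S)$ around its mean solved as a quadratic to produce the nested square root --- is exactly the derivation used in that source, and your algebra for the middle summand reproduces the stated expression. Two remarks. First, the last term as printed in the theorem, $\sqrt{\tfrac{3}{\delta}/(2r)}$, is a typo for $\sqrt{\ln\tfrac{3}{\delta}/(2r)}$ (compare the computation of $\eta$ in Algorithm~\ref{alg:distances2}); your McDiarmid step correctly yields the latter. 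Second, since the paper defines $\tilde{R}_r(\mathcal{F},S)$ \emph{without} an absolute value inside the supremum, the symmetrization of the two-sided deviation requires the extra care you allude to (bounding $\sup_{f_h}(L_S-L_U)$ and $\sup_{f_h}(L_U-L_S)$ separately and using that the Rademacher average of $-\mathcal{F}$ coincides with that of $\mathcal{F}$ by symmetry of the signs); that, together with pinning down the exact constant in the self-bounding inequality that makes the $4r\tilde{R}_r$ term come out right, is precisely the content of the Riondato--Upfal refinement, so your outline correctly isolates where the nontrivial work lies.
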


The exact computation of $\tilde{R}_r(\mathcal{F},S)$ depends on an extreme value, i.e., the supremum of deviations for all functions in $\mathcal{F}$, which can be expensive over a large (or infinite) set of functions \cite{Mitzenmacher2017}. Even a Monte Carlo simulation to estimating $\tilde{R}_r(\mathcal{F},S)$ is expensive to be extracted in this case; hence, we use the bound given by Theorem \ref{theo:massart}, which is a variant of the Massart's Lemma (see Theorem 14.22, \cite{Mitzenmacher2017}) that is convex, continuous in $\mathbb{R}^+$ and can be efficiently minimized by standard convex optimization methods. 

Consider the vector $v_{f_h} = (f_h(z_1),\ldots,f_h(z_m))$ for a given sample of $m$ elements, denoted by $S = \{z_1,\ldots,z_m\}$, and let $\mathcal{V}_S = \{v_{f_h}, f_h \in \mathcal{F}\}$.

\begin{theorem}{(Riondato and Upfal \cite{RiondatoUpfal})}\label{theo:massart}
    Let $w : \mathbb{R}^+ \rightarrow \mathbb{R}^+$ be the function
    
    $$w(s) = \frac{1}{s} \ln \sum_{v_{f_h} \in \mathcal{V}_S} \exp{\left(\frac{s^2 ||v_{f_h}||_2^2}{2m^2}\right)}.$$
    
    Then $\tilde{R}_r(\mathcal{F},S) \leq \min_{s \in \mathbb{R}^+} w(s).$
\end{theorem}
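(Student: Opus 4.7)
The plan is to prove the inequality $\tilde{R}_r(\mathcal{F},S) \le w(s)$ for an arbitrary $s > 0$, since the conclusion about $\min_{s \in \mathbb{R}^+} w(s)$ follows immediately. The strategy is the classical ``exponential Chernoff'' device used in the proof of Massart's Lemma: push the supremum through an exponential, replace it by a sum, then exploit independence and the moment generating function of the Rademacher random variables.

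First, I would fix $s > 0$ and apply Jensen's inequality to the convex function $\exp$ in order to move the expectation over $\sigma$ inside the exponential:
\begin{equation*}
\exp\!\left(s\,\tilde{R}_r(\mathcal{F},S)\right)
= \exp\!\left(s\,\mathbb{E}_\sigma\!\left[\sup_{f_h \in \mathcal{F}} \tfrac{1}{m}\sum_{i=1}^{m}\sigma_i f_h(z_i)\right]\right)
\le \mathbb{E}_\sigma\!\left[\exp\!\left(\sup_{f_h \in \mathcal{F}} \tfrac{s}{m}\sum_{i=1}^{m}\sigma_i f_h(z_i)\right)\right].
\end{equation*}
Since the exponential is monotone and nonnegative, I would then bound the supremum of the exponentials by their sum, obtaining
\begin{equation*}
\exp\!\left(s\,\tilde{R}_r(\mathcal{F},S)\right)
\le \sum_{v_{f_h} \in \mathcal{V}_S} \mathbb{E}_\sigma\!\left[\exp\!\left(\tfrac{s}{m}\sum_{i=1}^{m}\sigma_i f_h(z_i)\right)\right].
\end{equation*}

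Next I would use the independence of the Rademacher variables to factor each expectation as a product, and then invoke the standard MGF bound for a Rademacher variable, namely $\mathbb{E}[\exp(\alpha\sigma_i)] = \cosh(\alpha) \le \exp(\alpha^2/2)$, with $\alpha = s\,f_h(z_i)/m$. This yields
\begin{equation*}
\mathbb{E}_\sigma\!\left[\exp\!\left(\tfrac{s}{m}\sum_{i=1}^{m}\sigma_i f_h(z_i)\right)\right]
= \prod_{i=1}^{m} \mathbb{E}\!\left[\exp\!\left(\tfrac{s}{m}\sigma_i f_h(z_i)\right)\right]
\le \prod_{i=1}^{m}\exp\!\left(\tfrac{s^2 f_h(z_i)^2}{2m^2}\right)
= \exp\!\left(\tfrac{s^2\|v_{f_h}\|_2^2}{2m^2}\right).
\end{equation*}
Summing over $v_{f_h} \in \mathcal{V}_S$, taking logarithms, and dividing by $s$ gives exactly $\tilde{R}_r(\mathcal{F},S) \le w(s)$. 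Since $s > 0$ is arbitrary, the bound $\tilde{R}_r(\mathcal{F},S) \le \min_{s \in \mathbb{R}^+} w(s)$ follows.

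I do not anticipate a substantial obstacle: the argument is essentially bookkeeping around three standard inequalities (Jensen for $\exp$, sup-is-at-most-sum of nonnegatives, and the Rademacher MGF bound). The only mild subtlety is that the theorem is phrased for the sup of the signed deviation rather than its absolute value, which is precisely what allows the first Jensen step to give a useful bound; otherwise one would need to symmetrize the family or handle $|\cdot|$ separately. Mentioning the convexity and continuity of $w$ in $\mathbb{R}^+$ claimed in the surrounding text is a straightforward calculus check (convex combination of log-sum-exp composed with an affine function of $s^2$), and is not needed for the inequality itself.
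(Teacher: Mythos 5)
Your argument is correct: the chain Jensen $\to$ sup-by-sum $\to$ independence $\to$ Rademacher MGF bound $\le \exp(\alpha^2/2)$ is exactly the standard exponential-moment proof of this Massart-type bound, and it is the argument used in the cited source (Riondato and Upfal); the paper itself states Theorem~\ref{theo:massart} as an imported result and gives no proof to compare against. Your side remark that the sum may be taken over the distinct vectors in $\mathcal{V}_S$ rather than over all of $\mathcal{F}$, and that the signed (non-absolute) supremum in Definition~\ref{empiricalrademacher} is what lets the first Jensen step work without symmetrizing the family, are both accurate.
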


\section{Estimation for the Shortest Path Centrality and the All-pairs Shortest Path Problem}\label{sec:algapprox}

We first define the problem in terms of a range space, and then we give an outline of the two algorithms that we present in this paper. Both algorithms take as input an undirected graph $G = (V,E)$ with $n$ vertices and $m$ edges with non-negative edge weights, a sample schedule $(|S_i|)_{i \geq 1}$ and the quality and confidence parameters $0 < \epsilon, \delta < 1$, assumed to be constants. 

Let $n = |V|$ and $\mathcal{T}$ be the set of $n$ Dijkstra trees of $G$. The set $\mathcal{H}$ from Section \ref{subsec:samplecomp} is defined to be $V^2$ and the universe $U$ is the set $\mathcal{T}$.
For each pair $(u,v) \in V^2$, let $p_{uv}$ be a shortest path from $u$ to $v$. Let $\tau_{uv} = \{T_x \in \mathcal{T} : p_{uv} \in S(T_x)\}$. Let $\mathcal{I} = \{\tau_{uv} : (u,v) \in V^2\}$. Note that $\mathcal{R} = (\mathcal{T}, \mathcal{I})$ is a range space. For $T_x \in \mathcal{T}$, let $f_{uv} : \mathcal{T} \rightarrow \{0,1\}$ be the function $f_{uv}(T_x) = \mathds{1}_{\tau_{uv}}(T_x).$ The indicator function $\mathds{1}_{\tau_{uv}}(T_x)$ returns 1 if there is some shortest path from $u$ to $v$ in $T_x$ (and 0 otherwise).
We define $\mathcal{F} = \{f_{uv} : (u,v) \in V^2\}$.

Each $T_x \in \mathcal{T}$ is sampled according to the function $\pi(T_x) = \frac{1}{n}$ (which is a valid probability distribution), 
and $\mathbb{E}[f_{uv}(T_x)] = c(u,v)$ for all $(u,v) \in V^2$, as proved in Theorem \ref{theo:expec}.

\begin{theorem}\label{theo:expec}
    For $f_{uv} \in \mathcal{F}$ and for $T_x \in \mathcal{T}$, such that each $T_x$ is sampled according to the probability function $\pi(T_x)$, $\mathbb{E}[f_{uv}(T_x)] = c(u,v).$
\end{theorem}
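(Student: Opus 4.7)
The plan is to compute $\mathbb{E}[f_{uv}(T_x)]$ directly from the definition of expectation, using the uniform probability distribution $\pi$ on $\mathcal{T}$, and then identify the resulting sum with the numerator in the definition of $c(u,v)$. Since $f_{uv}$ takes values in $\{0,1\}$, this amounts to little more than unfolding definitions, so no probabilistic inequality is needed.

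Concretely, I would proceed in three short steps. First, I would write
\[
\mathbb{E}[f_{uv}(T_x)] = \sum_{T_x \in \mathcal{T}} \pi(T_x)\, f_{uv}(T_x),
\]
justified by the fact that $\mathcal{T}$ is a finite universe (it has exactly $n$ Dijkstra trees, one per vertex under the fixed ordering) and $T_x$ is drawn according to $\pi$. Second, I would substitute $\pi(T_x)=1/n$ (noting this is a valid distribution because $\sum_{T_x \in \mathcal{T}} 1/n = 1$) and $f_{uv}(T_x) = \mathds{1}_{\tau_{uv}}(T_x)$ to get
\[
\mathbb{E}[f_{uv}(T_x)] = \frac{1}{n}\sum_{T_x \in \mathcal{T}} \mathds{1}_{\tau_{uv}}(T_x).
\]
Third, I would recognize the sum as exactly $t_{uv}$ from Definition~\ref{def:shortestpathcentrality}, so that
\[
\mathbb{E}[f_{uv}(T_x)] = \frac{t_{uv}}{n} = c(u,v),
\]
which is the desired identity.

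There is no substantive obstacle here; the statement is essentially a bookkeeping lemma that records why the empirical average $L_S(f_{uv})$ on a sampled set of Dijkstra trees will estimate the centrality $c(u,v)$. The only point that deserves a brief mention in the write-up is that the enumeration over $\mathcal{T}$ is well-defined precisely because Dijkstra's algorithm is deterministic under the fixed vertex ordering, so $|\mathcal{T}|=n$ and $\pi$ is uniform on a set of size exactly $n$; this was already noted in Section~\ref{subsec:graphs} and can simply be invoked.
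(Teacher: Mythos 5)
Your proposal is correct and follows essentially the same route as the paper's proof: expand the expectation as a finite sum over $\mathcal{T}$, substitute $\pi(T_x)=1/n$ and $f_{uv}(T_x)=\mathds{1}_{\tau_{uv}}(T_x)$, and identify the sum with $t_{uv}$ to obtain $c(u,v)$. The extra remark about $|\mathcal{T}|=n$ being well-defined via the fixed vertex ordering is a harmless addition already noted in the paper's preliminaries.
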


\begin{proof}
    Given an undirected weighted graph $G = (V,E)$, for all $(u,v) \in V^2$, we have from Definition \ref{def:averages}
    
    
    \begin{align*}
        L_U(f_{uv}) = L_{\mathcal{T}}(f_{uv}) = \mathbb{E}_{T_x \in \mathcal{T}}[f_{uv}(T_x)] &= \sum_{T_x \in \mathcal{T}} \pi(T_x)f_{uv}(T_x) \\
        &= \frac{1}{n} \sum_{T_x \in \mathcal{T}} \mathds{1}_{\tau_{uv}}(T_x) = \frac{t_{uv}}{n} = c(u,v)
    \end{align*}
    
\end{proof}


Let $S = \{T_i, 1 \leq i \leq r\}$ be a set of $r$ Dijkstra trees sampled independently and identically from $\mathcal{T}$. Next, we define $\tilde{c}(u,v)$, the estimation to be computed by the algorithm, as the empirical average from Definition \ref{def:averages}:

$$\tilde{c}(u,v) = L_S(f_{uv}) = \frac{1}{r} \sum_{T_i \in S} f_{uv}(T_i) = \frac{1}{r} \sum_{T_i \in S} \mathds{1}_{\tau_{uv}}(T_i).$$

For each $(u,v) \in V^2$, the value $\tilde{c}(u,v)$ can be defined as $||\textbf{v}_{uv}||_1/r$, where $$\textbf{v}_{uv} = (f_{uv}(T_1),\ldots,f_{uv}(T_r)).$$ Each function $f_{uv}$, however, is a binary function such that $||\textbf{v}_{uv}|| = t_{uv}$. Hence, we denote $\mathcal{V}$ as the set of such values, i.e., $\mathcal{V} = \{t_{uv}, (u,v) \in V^2\}$. Note that $|\mathcal{V}| \leq (|V^2|-n)/2$, since $G$ is undirected -- and then for a pair $(u,v)$, $t_{uv} = t_{vu}$ -- and there may be different pairs of vertices $(u_k,v_k)$ and $(u_l,v_l)$ with $t_{u_k v_k} = t_{u_l v_l}$.


The VC-dimension of the range space $\mathcal{R} = (\mathcal{T},\mathcal{I})$, which is an upper bound to the fixed sample size that guarantees that $|\tilde{c}(u,v) - c(u,v)| \geq \epsilon$ with probability at least $1-\delta$, for each $(u,v) \in V^2$ and for $0 < \epsilon, \delta < 1$, is stated below. In remainder of this paper let $\textrm{Diam}_V(G)$ be the vertex-diameter of $G$, i.e., the maximum number of vertices in a shortest path of $G$.

\begin{theorem}\label{theo:vcdimproblem}
    The VC-dimension of the range space $\mathcal{R} = (\mathcal{T}, \mathcal{I})$ is
    $$\textrm{VCDim}(\mathcal{R}) \leq \lfloor \lg \textrm{Diam}_V(G) + \lg n + 2 \rfloor.$$
\end{theorem}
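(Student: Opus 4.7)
The plan is a double-counting argument that plays the shattering condition against the combinatorial structure of a single Dijkstra tree. Suppose $\mathcal{S} = \{T_{y_1}, \ldots, T_{y_k}\} \subseteq \mathcal{T}$ is shattered by $\mathcal{I}$. By Definition~\ref{def:vcdim}, for every $\mathcal{A} \subseteq \mathcal{S}$ there is a pair $(u_\mathcal{A}, v_\mathcal{A}) \in V^2$ with $\tau_{u_\mathcal{A} v_\mathcal{A}} \cap \mathcal{S} = \mathcal{A}$. Distinct subsets $\mathcal{A} \neq \mathcal{A}'$ must yield distinct ranges, and since $\tau_{uv}$ depends only on the unordered pair $\{u,v\}$ (because $S(T_x)$ contains each canonical path together with its reverse), these distinct ranges correspond to distinct unordered witness pairs $\{u_\mathcal{A}, v_\mathcal{A}\}$, and hence to distinct canonical shortest paths $p_{u_\mathcal{A} v_\mathcal{A}}$.

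Next I would single out one tree of $\mathcal{S}$, say $T_{y_1}$, and restrict attention to the $2^{k-1}$ subsets $\mathcal{A}$ that contain $T_{y_1}$. For each such $\mathcal{A}$, the condition $T_{y_1} \in \tau_{u_\mathcal{A} v_\mathcal{A}}$ is, by the definition of $\tau$, equivalent to $p_{u_\mathcal{A} v_\mathcal{A}} \in S(T_{y_1})$. Combined with the distinctness established in the first step, this exhibits $2^{k-1}$ pairwise distinct canonical shortest paths lying in $S(T_{y_1})$, so $|S(T_{y_1})| \geq 2^{k-1}$.

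The third step is a structural upper bound on $|S(T_{y_1})|$. By construction, $S(T_{y_1}) = \bigcup_{v \neq y_1}(S(p_{y_1 v}) \cup S(p_{v y_1}))$, so every element of $S(T_{y_1})$ is a directed subpath of a root-to-vertex tree path, i.e.\ corresponds to an ancestor--descendant pair of $T_{y_1}$ together with one of two traversal directions. Thus $|S(T_{y_1})|$ is bounded by $2 \sum_{v \in V \setminus \{y_1\}} \mathrm{depth}_{T_{y_1}}(v)$. Because the tree-path from $y_1$ to any $v$ is a shortest path in $G$ having at most $\textrm{Diam}_V(G)$ vertices, every depth is at most $\textrm{Diam}_V(G) - 1$, and therefore $|S(T_{y_1})| \leq 2(n-1)(\textrm{Diam}_V(G)-1) \leq 2n\cdot\textrm{Diam}_V(G)$.

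Combining the two inequalities gives $2^{k-1} \leq 2n \cdot \textrm{Diam}_V(G)$, so $k \leq \lg n + \lg \textrm{Diam}_V(G) + 2$, and taking the floor (since $k$ is an integer) yields the claimed bound. The step I expect to be the main obstacle is the combinatorial reconciliation in the third step, namely verifying carefully that the elements of $S(T_{y_1})$ are in bijection with ordered ancestor--descendant pairs of $T_{y_1}$ under the canonical Dijkstra convention, and that the resulting factor of $2$ combines with the $\textrm{Diam}_V(G) - 1$ depth bound so as to deliver exactly the additive constant $2$ appearing in the statement rather than $1$ or $3$.
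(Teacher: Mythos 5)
Your proposal is correct and follows essentially the same route as the paper's proof: fix one shattered tree, observe it must lie in $2^{k-1}$ distinct ranges (hence contain $2^{k-1}$ distinct canonical shortest paths), and bound $|S(T_{y_1})| \leq 2n\cdot\textrm{Diam}_V(G)$ via the depths of vertices in the tree to conclude $2^{k-1} \leq 2n\cdot\textrm{Diam}_V(G)$. Your version is in fact somewhat more careful than the paper's, since you make explicit why distinct subsets yield distinct ranges and hence distinct shortest paths inside $S(T_{y_1})$, a step the paper leaves implicit.
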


\begin{proof}
    Let $\textrm{VCDim}(\mathcal{R}) = k$, where $k \in \mathbb{N}$. Then, there is $S \subseteq U$ such that $|S| = k$ and $S$ is shattered by $\mathcal{I}$. By the definition of shattering, each $T_i \in S$ must appear in $2^{k-1}$ different ranges in $\mathcal{I}$. On the other hand, let $\textrm{Diam}_V(T_i)$ be the vertex-diameter of $T_i$. Then $|S(T_i)| = \sum_{v \in V \setminus \{i\}} 2 \cdot h(v) \leq 2 \sum_{v \in V \setminus \{i\}} \textrm{Diam}_V(T_i) = 2n \cdot \textrm{Diam}_V(T_i)$, where $S(T_i)$ contains the shortest paths in $T_i$ as defined in Section \ref{subsec:graphs} and $h(v)$ is the height of $v$ in $T_i$. We have that $|S(T_i)| \leq 2n \cdot \textrm{Diam}_V(G)$,  since in the worst case a Dijkstra tree $T_i$ cannot be deeper than the vertex-diameter of $G$. Hence, $2^{k-1} \leq 2n \cdot \textrm{Diam}_V(G)$, and $k-1 \leq \lg \textrm{Diam}_V(G) + \lg n + 1$.
    Since $k$ must be integer, $k \leq \lfloor \lg \textrm{Diam}_V(G) + \lg n + 2 \rfloor \leq \lg \textrm{Diam}_V(G) + \lg n + 2$. Finally, $\textrm{VCDim}(\mathcal{R}) = k \leq \lfloor \lg \textrm{Diam}_V(G) + \lg n + 2 \rfloor.$
\end{proof}

Note that for a sample of size $r$, by Hoeffding bound we have $$\Pr(|\tilde{c}(u,v) - c(u,v)| \geq \epsilon) \leq 2exp(-2r\epsilon^2).$$
Applying the union bound for all $(u,v) \in V^2$, the value of $r$ must be $2exp(-2r^2) n^2 \geq \delta$, which leads to $r \geq \frac{1}{2\epsilon^2}(\ln 2 + 2 \ln n + \ln(1/\delta))$. Even though $\textrm{Diam}_V(G)$ might be as large as $n$, we note that our bound given in Theorem \ref{theo:vcdimproblem} is tighter since it depends on the combinatorial structure of $G$. In particular, since a bound on $\textrm{Diam}_V(G)$ can be computed efficiently, in our algorithm we compute a sample size tailored for the input graph in question. For instance, if $\textrm{Diam}_V(G) = \log n$ (which is common in many real-world graphs, in particular power-law graphs), we have that $\textrm{VCDim}(\mathcal{R}) \leq \lfloor \lg (\log n) + \lg n + 2 \rfloor$ which give better constants for the sample size in our sampling algorithm. 

The results above can be improved if we consider a progressive sampling approach instead of running the sampling algorithm directly in a sample of fixed size. We define the progressive sampling schedule of this work as follows: let $S_1$ be the initial sample size and $\delta_1 = \delta/2$. At this point, the only information available about the empirical Rademacher complexity of $S_1$ is that $\tilde{R}_r(\mathcal{F},S_1) \geq 0$. Plugging this with the r.h.s. of the bound in Theorem \ref{theo:rademacherbound}, which has to be at most $\epsilon$, we have

$$\frac{\ln(3/(\delta/2))+\sqrt{\ln(3/(\delta/2)) \ln(3/(\delta/2))}}{|S_1|} + \sqrt{\frac{\ln(3/(\delta/2))}{2|S_1|}} \leq \epsilon$$

$$\frac{2 \ln(6/\delta)}{|S_1|} + \sqrt{\frac{\ln(6/\delta)}{2|S_1|}} \leq \epsilon$$

$$\frac{4 \ln^2(6/\delta)}{|S_1|^2} + \frac{\ln(6/\delta)}{2|S_1|} \leq \epsilon^2$$

which leads to the quadratic inequality
$$2|S_1|^2\epsilon^2 - |S_1|\ln(6/\delta) - 8\ln^2(6/\delta) \geq 0.$$ 

with solution
$$|S_1| \geq \frac{\ln(6/\delta)(1+\sqrt{1+8^2 \epsilon^2})}{4\epsilon^2}. \quad (1)$$

There is no fixed strategy for scheduling. In our algorithm we follow the results of Provost et al. \cite{provost1999} as well as Riondato and Upfal \cite{RiondatoUpfal} where a geometric sampling schedule is proposed as a strategy (the authors conjecture that such strategy is optimal, but we do not need such assumption), i.e., the one that $S_i = c^i S_1$, for each $i \geq 1$ and for a constant $c > 1$.

Given $0 < \epsilon, \delta < 1$, let $(|S_i|)_{i \geq 1}$ be a geometric sample schedule with starting sample size defined in Equation (1). We present the outline of the algorithms for estimating the shortest path centrality and for the computation of shortest paths with such centrality at least $\epsilon$. Both algorithms return the correct output with probability $1 - \delta$.  For instance, the first algorithm outputs a table $\tilde{c}$ with the centrality estimation, while the second outputs a table $d$ with the distances between the pair of vertices and it stops sampling if the sample size reach the bound in Theorem \ref{teo:esample} (ii).  Since both algorithms are similar, we outline them in parallel.  

Consider the table for the estimation of canonical shortest paths tree $\tilde{t}$ and the set $\mathcal{V}$ that contains the values in $\tilde{t}$ without repetition. 
The following steps are repeated for each $i \geq 1$. For the sake of clarity, $S_0 = \emptyset$.

\begin{description}
    \item[step 1.] Create a sample of $k = |S_i| - |S_{i-1}|$ elements of $V$ chosen uniformly and independently (with replacement) at random;
    \item[step 2.] Sample a vertex $x$, compute a canonical shortest path tree $T_x$ and an array of distances $dist$ of size $n-1$ from $x$ to each $y \in V$, $x \neq y$. For every shortest $p_{uv}$ in $S(T_x)$, increase the canonical shortest path tree value $\tilde{t}(u,v)$ by $1$. Repeat this step $k$ times;
    \item[step 3.] Compute the bound to $\tilde{R}_r(\mathcal{F},S_i)$ by minimizing the function defined in Theorem \ref{theo:massart}. If it satisfies the stopping condition defined in Theorem \ref{theo:rademacherbound} or if the sample size corresponds to the bound in Theorem \ref{teo:esample} (ii), then return the set $\{\tilde{d}(u,v), (u,v) \in V^2 \text{ and } \tilde{d}(u,v) > 0\}$ (and also $\{\tilde{c}(u,v) = \tilde{t}_{uv}/|S_i|, (u,v) \in V^2 \text{ and } \tilde{t}_{uv} > 0\}$ if the estimation $\tilde{c}$ is being computed). Otherwise, increase the size of $S_i$ until it has size $|S_{i+1}|$, increase $i$ and return to step 1.
\end{description}

Step 1 is trivial and step 2 can be performed by running Dijkstra's Algorithm in time $\mathcal{O}(m + n \log n)$ in the input graph $G$ for each sampled vertex $x$. The update of tables $\tilde{t}$ and $d$ is performed by a modification of a DFS algorithm running on $T_x$ with starting vertex $x$. This modification of a DFS can be implemented in the following way. When recursively traversing $T_x$, keep a list $L$ of the predecessors of each visited vertex $v$. So when a vertex $v$ in $T_x$ is visited by the DFS, we have that every vertex $u$ in the list $L$ is a starting vertex of a shortest path from $u$ to $v$ in $S(T_x)$. Since $T_x$ is undirected, we can update both $d(u,v)$ and $d(v,u)$ (and also $\tilde{c}(u,v)$ and $\tilde{c}(v,u)$). Note that since $S(T_x)$ might have $\mathcal{O}(n \cdot \textrm{Diam}_V(G))$ shortest paths, this algorithm perform $\mathcal{O}(n \cdot \textrm{Diam}_V(G))$ updates in the output table. The dimension of the tables $d$, $\tilde{c}$ and $\tilde{t}$ are $\min(n \cdot \textrm{Diam}_V(G)\lg n, n^2)$, since in the worst case $\textrm{Diam}_V(G) = n$ and a value is not reinserted in the table during the traversing of the modified DFS.


Next we give the DFS modification and the main algorithm (Algorithms \ref{alg:distances1} and \ref{alg:distances2}, respectively) in detail. For the storage of each value in $\mathcal{V}$ in a sparse way and without repetition, Algorithm \ref{alg:distances1} keeps an array $count$ of size $n$, such that each value in $count[p]$ contains the amount of pairs of vertices having $p$ canonical shortest path trees, for $1 \leq p \leq n$. 
For the sake of clarity, we present in Algorithm \ref{alg:distances2} the computation of table $d$ with the option to compute table $\tilde{c}$ if the indicator variable for the shortest path centrality $spc$ is equal to one.

\begin{algorithm}[!htbp]
		\SetAlgoNoLine
		\SetAlgoNoEnd
		\DontPrintSemicolon	
		\KwData{List $L$, array of distances $dist$ generated by Dijkstra algorithm, vertex $i$, Dijkstra tree $T_x$, exact distances table $d$, number of canonical shortest path trees table $\tilde{t}$, set $\mathcal{V}$, counter array $count$.}
		
		\KwResult{Updates the distances $d$ and the values in $t$ using paths from $S(T_x)$.}
	    \For{$j \in L$}{
        
		        $d[j][i] \leftarrow d[i][j] \leftarrow dist[i] - dist[j]$\;
		        
		        \If{\text{first time insertion of j and i in} $\tilde{t}$}{
		            $\tilde{t}[j][i] \leftarrow \tilde{t}[i][j] \leftarrow 0$\;
		        }
		        $\tilde{t}[j][i] \leftarrow \tilde{t}[i][j] \leftarrow \tilde{t}[i][j] + 1$\;
		        
		        $count[\tilde{t}[j][i]] \leftarrow count[\tilde{t}[j][i]] + 1$\;
		        
		        \If{($\tilde{t}[j][i]$ or $\tilde{t}[i][j]$) $\notin \mathcal{V}$}{
		        
		            $\mathcal{V}$.add($\tilde{t}[j][i]$)\;
		            
		        }\ElseIf{$count[\tilde{t}[j][i]-1] \geq 1$}{
		            $count[\tilde{t}[j][i]-1] \leftarrow count[\tilde{t}[j][i]-1] - 1$\;
		            
		            \If{$count[\tilde{t}[j][i]-1] = 0$}{
		                $\mathcal{V}$.remove($\tilde{t}[j][i]]$)\;
		            }
		        }
		}
		    
		$L$.add($i$)\;
		    
		\For{$k \in i$.outNeighbors()}{
		    \textsc{updateShortestPaths}($L$, $dist$, $k$, $T_x$, $d$, $\tilde{t}$, $\mathcal{V}$, $count$)\;
		}
		$L$.remove($i$)\;
	
		\caption{\textsc{updateShortestPaths($L$, $dist$, $i$, $T_x$, $d$, $\tilde{t}$, $\mathcal{V}$, $count$)}}
		\label{alg:distances1}
\end{algorithm}


\begin{lemma}\label{lema:lista_L}
At the beginning of every call to Algorithm \ref{alg:distances1}, the list $L$ contains the  predecessors of $i$ in the path from $x$ to $i$ in the tree $T_x$. 
\end{lemma}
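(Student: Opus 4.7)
The plan is to prove the claim by induction on the depth of $i$ in $T_x$ (equivalently, on the recursion depth), but to do so cleanly I would first strengthen the invariant to a two-sided one: \emph{every invocation of} \textsc{updateShortestPaths} \emph{both starts and ends with $L$ equal to the list of proper ancestors of $i$ on the unique $x$-to-$i$ path in $T_x$, in root-to-$i$ order.} This strengthening is necessary because the sibling recursive calls in the final \textbf{for} loop share the same list $L$; without knowing that each call restores $L$ on return, one cannot conclude that the next sibling sees the correct prefix.

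For the base case, the top-level invocation is made on $i = x$ with $L = \emptyset$. Since $x$ has no proper ancestor in $T_x$, the invariant holds at entry; and inspection of the body shows that the only mutations of $L$ are the \texttt{add}$(i)$ before the recursive loop and the matching \texttt{remove}$(i)$ after it, so $L$ is empty again on exit.

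For the inductive step, suppose the invariant holds for every call at depth at most $d$, and consider a call at a vertex $i$ of depth $d+1$. This call was issued from within the parent $q$ of $i$ in $T_x$, precisely between the \texttt{add}$(q)$ and \texttt{remove}$(q)$ lines of $q$'s invocation. By the inductive hypothesis applied at $q$'s entry, $L$ held the ancestors of $q$; after \texttt{add}$(q)$ it holds the ancestors of $q$ followed by $q$ itself, which is exactly the ancestor list of $i$. This establishes the entry part of the invariant at $i$. Inside $i$'s call the \textbf{for} $j \in L$ loop only reads $L$, and then each recursive call on a child $k$ sees $L$ with $i$ appended (i.e., the ancestors of $k$), satisfying the entry hypothesis for depth $d+2$; by the inductive hypothesis, that recursive call returns with $L$ unchanged, so subsequent sibling calls also see the correct list. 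After the loop terminates, $L$ is still ancestors-of-$i$ plus $i$, and the final \texttt{remove}$(i)$ restores it to the ancestors of $i$, proving the exit part.

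The main obstacle, as already indicated, is really bookkeeping: one must verify that the \texttt{add}/\texttt{remove} pair straddles \emph{all} mutations to $L$ within a single invocation, and that the recursive calls to children are the only mutators in between. Once the stronger "entry-equals-exit" invariant is in place, the argument is a routine induction following the recursion tree of the DFS, which coincides exactly with $T_x$ rooted at $x$.
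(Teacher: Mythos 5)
Your argument follows the same basic route as the paper's proof---an induction along the recursion tree of the DFS---but it is more careful on precisely the point the paper glosses over. The paper inducts on $|L|$ and simply asserts that at each recursive call in line 15 the list equals $L \cup \{i\}$; this silently assumes that the earlier sibling calls issued inside the same loop have restored $L$ before the next one starts, which is exactly the \texttt{add}/\texttt{remove} bookkeeping you make explicit by strengthening to an entry-and-exit invariant. That strengthening is the right move, and your write-up is in this respect more complete than the paper's own.

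There is, however, a circularity in the induction as you have set it up. You assume the two-sided invariant ``for every call at depth at most $d$'' and try to prove it for a call at depth $d+1$; but to establish the exit half at $i$ you invoke the inductive hypothesis for $i$'s children, which sit at depth $d+2$, and to establish the entry half when $i$ is not the first child of its parent you need the exit half for $i$'s earlier siblings, which sit at depth $d+1$. Neither is covered by the hypothesis as stated. The repair is routine and does not require any new idea: either perform a strong induction on the chronological index of the invocation in the execution of the DFS (so that every call that returns before the current call is entered, and every call nested inside it, is covered), or split the argument in two---prove the ``entry equals exit'' restoration property by induction on the height of the subtree rooted at $i$ (leaves first), and then derive the entry condition by induction on depth using that property as a lemma. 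With either fix your proof is correct.
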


\begin{proof}  We use induction on the size of $L$. Let $P_{xi}$ be the set of predecessors of $i$ in the path from $x$ to $i$ in the tree $T_x$. If
 $L=\{\}$ (i.e., \textsc{updateShortestPaths} is called by Algorithm \ref{alg:distances1} and has not yet recursevily called itself), then $i=x$. As $P_{xx} = \{\}$, then $L=P_{xx}$, and the base case follows.
 
 In the inductive step, we show that the result follows for the recursive calls of \textsc{updateShortestPaths} (line 15). Let $L'$ and $k$ be, respectively, the list and the vertex used as arguments on the recursive calls. As $k$ is a successor of $i$, then $P_{xk} = P_{xi} \cup \{i\}$. But, from line 13, $L' = L \cup \{i\}$. By the induction hypothesis $L = P_{xi}$ and therefore $L' = P_{xi} \cup \{i\} = P_{xk}$.
\end{proof}

\begin{algorithm}[!htbp]
		\SetAlgoNoLine
		\SetAlgoNoEnd
		\DontPrintSemicolon	
		\KwData{Weighted graph $G = (V,E)$ with $n = |V|$ and $m = |E|$, accuracy parameter $0 < \epsilon < 1$, confidence parameter $0 < \delta < 1$, sample scheduling $(S_i)_{i \geq 1}$}
		
		\KwResult{Probabilistic Shortest Path Distance $d[u][v]$, for each $(u,v) \in V^2$ with $d[u][v] > 0$.}

		$count[i] \leftarrow 0, \forall i \in {1,\ldots,n}$\;
		
		$|S_0| \leftarrow 0$\;
		
		$\mathcal{V} \leftarrow \emptyset$\;

		$i \leftarrow 0, j \leftarrow 1$\;
		
		\Do{$\eta > \epsilon$}{
		    
		    $i \leftarrow i + 1$\;
		    
		    \For{$l \leftarrow 1 $ \KwTo $|S_i| - |S_{i-1}|$}{
		        sample $x \in V$ with probability $1/n$\;
		    
		        $T_x,\, dist \leftarrow \textsc{singleSourceShortestPaths}(x)$\; 
		    
		        $L \leftarrow$ empty list\;
		    
		        \textsc{updateShortestPath}($L$, $dist$, $x$, $T_x$, $d$, $\tilde{t}$, $\mathcal{V}$, $count$)\;
		    
		    }

		    $w_s \leftarrow \min_{s \in \mathbb{R}^+} \frac{1}{s} \ln \sum_{t \in \mathcal{V}} \exp{\frac{s^2 t}{2|S_i|^2}}$\;
		    
		    $\delta_i \leftarrow \delta/2^i$\;
		        
		    $\eta \leftarrow 2w_s + \frac{\ln \frac{3}{\delta_i} + \sqrt{(\ln \frac{3}{\delta_i} + 4|S_i| w_s) \ln \frac{3}{\delta_i}}}{|S_i|} + \sqrt{\frac{\ln \frac{3}{\delta_i}}{2|S_i|}}$}
		    
		
		$\tilde{c}[u][v] \leftarrow \tilde{t}[u][v]/|S_i|, \text{ for each } (u,v) \in V^2$ such that $\tilde{t}[u][v] > 0$\;
		\If{$spc = 1$}{
		    \textbf{return }{$d[u][v] \text{ and } \tilde{c}[u][v], \,\, \text{ for each} (u,v) \in V^2$ such that $\tilde{t}[u][v] > 0$}
		}\Else{
		    \textbf{return }{$d[u][v], \,\, \text{ for each } (u,v) \in V^2$ such that $\tilde{d}[u][v] > 0$}
		}
		\caption{\textsc{ProbabilisticAllPairsShortestPaths($G$,$\epsilon$,$\delta$,$spc$)}}
		\label{alg:distances2}
\end{algorithm}

\begin{theorem}\label{theo:distances}
The value $d[i][j]$ (resp. $d[j][i]$) set by Algorithm \ref{alg:distances1} is equal to the distance between vertices $i$ and $j$ (resp. $j$ and $i$).    
\end{theorem}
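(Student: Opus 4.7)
The plan is to combine Lemma \ref{lema:lista_L} with the well-known optimal substructure property of shortest paths. Algorithm \ref{alg:distances1} sets $d[i][j]=d[j][i]=dist[i]-dist[j]$ for each $j\in L$, where $dist$ is the array of distances from the root $x$ produced by Dijkstra's algorithm on input $G$. Thus the task reduces to showing that for every $j\in L$ the quantity $dist[i]-dist[j]$ equals $d(i,j)$ (the true distance in $G$ from $i$ to $j$).

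First I would invoke Lemma \ref{lema:lista_L} to conclude that at every recursive call of \textsc{updateShortestPaths} with vertex $i$ and list $L$, each $j\in L$ is a predecessor of $i$ along the unique $x$-to-$i$ path $p_{xi}$ in $T_x$. Because $T_x$ is a shortest paths tree rooted at $x$, $p_{xi}$ is a shortest path from $x$ to $i$ in $G$, and it decomposes as the concatenation $p_{xj}\cdot p_{ji}$, where $p_{xj}$ is the subpath from $x$ to $j$ in $T_x$ and $p_{ji}$ is the subpath from $j$ to $i$ in $T_x$. By Dijkstra's correctness, $\ell(p_{xi})=dist[i]$ and $\ell(p_{xj})=dist[j]$, so $\ell(p_{ji})=dist[i]-dist[j]$.

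Next, I would appeal to optimal substructure: any subpath of a shortest path is itself a shortest path between its endpoints. Applied to $p_{xi}$, the subpath $p_{ji}$ is a shortest path from $j$ to $i$ in $G$, and hence $d(j,i)=\ell(p_{ji})=dist[i]-dist[j]$. Because $G$ is undirected, $d(i,j)=d(j,i)$, so the assignment $d[i][j]=d[j][i]=dist[i]-dist[j]$ on the line in question stores the correct distance in both entries. Finally, since the algorithm visits every vertex $i$ of $T_x$ through the recursive calls in line 15 and considers, by Lemma \ref{lema:lista_L}, exactly the set of ancestors of $i$ on $p_{xi}$, every pair $(i,j)$ for which $p_{ji}\in S(T_x)$ is correctly updated.

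The only subtlety is to be careful that the value written to $d[i][j]$ cannot be subsequently overwritten with a wrong value across calls coming from different Dijkstra trees: since every value ever stored is the true distance $d(i,j)$, repeated assignments are harmless. The main routine-yet-non-trivial step is the invocation of optimal substructure together with the guarantee from Lemma \ref{lema:lista_L} that $L$ is exactly the ancestor list; once these are in place, the equality $dist[i]-dist[j]=d(i,j)$ is immediate.
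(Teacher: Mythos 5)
Your proof is correct and follows essentially the same route as the paper's: both rest on Lemma \ref{lema:lista_L} to guarantee that each $j \in L$ is an ancestor of $i$ on the shortest path $p_{xi}$ in $T_x$, and then on the optimal-substructure property of shortest paths to conclude that the subpath from $j$ to $i$ is a shortest path of length $dist[i]-dist[j]$. The only difference is presentational: the paper unpacks that substructure property into a two-case proof by contradiction (one case for each inequality), whereas you invoke it directly as a known lemma.
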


\begin{proof}
Let $d_{st}$ be the distance between vertices $s$ and $t$. 
The algorithm sets $d[i][j] = d_{xi} - d_{xj}$. Thus, it suffices to prove that $d_{xi} - d_{xj} = d_{ij}$. Suppose by contradiction that $d_{xi} - d_{xj} \neq d_{ij}$, and split this supposition in two cases: (i) $d_{xi} - d_{xj} > d_{ij}$ and (ii) $d_{xi} - d_{xj} < d_{ij}$. In case (i), the path from $x$ to $i$ can be traversed in two parts, from $x$ to $j$ and from $j$ to $i$. Hence, the total distance from $x$ to $i$ is $d_{xj} + d_{ij} < d_{xj} + d_{xi} - d_{xj} = d_{xi}$, contradicting the fact that $d_{xi}$ is minimum. In case (ii), by Algorithm \ref{alg:distances1} we have that $j \in L$, and therefore (Lemma \ref{lema:lista_L}) $j$ is a predecessor of $i$. So, the shortest path between $x$ and $i$ necessarily passes through $j$. Let $p_{xi}$ be the shortest path between $x$ and $i$. Thus, $i$ can reach $j$ through the path $p_{xi}$ and in this case the distance is $d_{xi} - d_{xj}$. But, by the hypothesis of case (ii), $d_{xi} - d_{xj} < d_{ij}$, contradicting the fact that  $d_{ij}$ is minimum. 
\end{proof}

\begin{corollary}\label{coro:paths}
    All shortest paths in $S(T_x)$ can be computed by a modification of Algorithm \ref{alg:distances1}.
\end{corollary}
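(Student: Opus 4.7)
The plan is to augment Algorithm~\ref{alg:distances1} with a predecessor table $\pi$ that records, for every pair $(j,i)$ handled during the DFS traversal of $T_x$, the immediate predecessor of $i$ on the shortest path from $j$ to $i$; reconstructing a path is then a matter of walking back through $\pi$.

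First, I would invoke Lemma~\ref{lema:lista_L}: at the moment Algorithm~\ref{alg:distances1} processes vertex $i$, the list $L$ is exactly the ordered sequence of predecessors of $i$ on the $x$-to-$i$ path in $T_x$. Hence, for each $j \in L$ seen by the loop at line~1, the shortest path $p_{ji} \in S(T_x)$ is the suffix of $L$ starting at $j$ followed by $i$, and $p_{ij}$ is its reverse.

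Second, I would insert, alongside the existing distance update, two constant-time assignments: set $\pi[j][i]$ to the tail of $L$ (the immediate predecessor of $i$ in $T_x$, or $j$ itself when $j$ is that tail), and set $\pi[i][j]$ to the element of $L$ immediately following $j$ (or $i$ if $j$ is the tail). Maintaining $L$ as an array together with a per-vertex index of its current position makes both values available in $O(1)$, so the asymptotic running time of Algorithm~\ref{alg:distances1} is unchanged and the space footprint of $\pi$ matches the bound already established for $d$.

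Third, on demand I would reconstruct any $p_{ji} \in S(T_x)$ by iterating $v \leftarrow \pi[j][v]$ starting from $v = i$ until $v = j$, yielding the vertex sequence of $p_{ji}$ in reverse in time $O(|p_{ji}|)$. Correctness follows by induction on the length of the reconstructed sequence: the pointer chain retraces precisely the suffix of $L \cup \{i\}$ from $j$ to $i$, and Theorem~\ref{theo:distances} certifies that the traced path has total weight $d[j][i]$. Taking the union over the sampled roots $x$ covers every path in $\bigcup_x S(T_x)$, which is all that is required.

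The main obstacle is the bookkeeping needed to read both the tail of $L$ and $j$'s successor in $L$ in $O(1)$ without breaking the invariant of Lemma~\ref{lema:lista_L}. Since $L$ is modified only at its tail (lines~13 and~16), implementing $L$ as an array with an auxiliary position table resolves this cleanly, preserving the invariant, the correctness of Theorem~\ref{theo:distances}, and the stated running time of Algorithm~\ref{alg:distances1}.
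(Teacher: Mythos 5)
Your proposal is correct and takes essentially the same approach as the paper: the paper's (one-line) proof also stores the immediate predecessor of $i$ on $p_{xi}$ alongside each entry $d[j][i]$ in line~2, from which paths are reconstructed by following predecessor pointers. Your version merely spells out the bookkeeping (the symmetric pointer $\pi[i][j]$, the $O(1)$ access to $L$, and the reconstruction loop) in more detail.
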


\begin{proof}
    We can store $z$ alongside with $d[j][i]$ in line 2, where $z$ is the immediate predecessor of $i$ in the shortest path $p_{xi}$, for each $j \in L$.
\end{proof}

\begin{theorem}\label{theo:alg2}

    Consider a sample $S_r = \{T_1,\ldots,T_r\}$ of size $r$ and let $\eta_i$ be the value obtained in line 14 on the $i$-th iteration. Then $\eta_r = 2w_s + \frac{\ln \frac{3}{\delta_r} + \sqrt{(\ln \frac{3}{\delta_r} + 4|S_r|w_s) \ln \frac{3}{\delta_r}}}{|S_r|} + \sqrt{\frac{\ln \frac{3}{\delta_r}}{2|S_r|}}$, where $\delta_r = \delta/2^r$, is the value where $r$ is the minimal $i \geq 1$ such that $\eta_r \leq \epsilon$ 
    for the input graph $G = (V,E)$ and for fixed constants $0 < \epsilon,\delta < 1$. Algorithm \ref{alg:distances2} returns with probability at least $1-\delta$ the exact distance $d(u,v)$, for each $(u,v) \in V^2$ such that $d(u,v) > 0$, and corresponding shortest path between the vertices $u$ and $v$ whenever $p_{uv}$ has centrality at least $\epsilon$. Additionally, the value of $\tilde{c}(u,v)$ is within $\epsilon$ error to the value of $c(u,v)$ with probability $1-\delta$.
\end{theorem}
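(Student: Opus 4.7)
The plan is to decompose the theorem into three claims and address each in turn, then tie them together via a union bound over the progressive sampling iterations. The three claims are: (i) the algorithm terminates at some finite iteration $r$; (ii) at termination, $|\tilde{c}(u,v) - c(u,v)| \leq \epsilon$ for all pairs with probability at least $1-\delta$; and (iii) whenever $c(u,v) \geq \epsilon$, the exact distance $d(u,v)$ and a corresponding shortest path are correctly computed.

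First, for the union bound over iterations, I would use the geometric confidence schedule $\delta_i = \delta/2^i$ used in line 13, so that $\sum_{i \geq 1} \delta_i = \delta$. At iteration $i$, Theorem \ref{theo:rademacherbound} combined with Theorem \ref{theo:massart} applied to the family $\mathcal{F}$ guarantees that with probability at least $1-\delta_i$, $\sup_{f_{uv} \in \mathcal{F}} |L_{S_i}(f_{uv}) - L_U(f_{uv})| \leq \eta_i$, where $\eta_i$ is the quantity computed in line 14. Taking a union bound across all iterations, with probability at least $1 - \delta$ the bound $\sup_{f_{uv}} |L_{S_i}(f_{uv}) - L_U(f_{uv})| \leq \eta_i$ holds at every iteration simultaneously. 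Termination is guaranteed because either the stopping condition $\eta_i \leq \epsilon$ eventually holds (since $\tilde{R}_r(\mathcal{F},S)$ decays with $r$ and the schedule is geometric), or the sample size surpasses the $\epsilon$-net bound from Theorem \ref{teo:esample}(ii), which is finite given the VC-dimension bound of Theorem \ref{theo:vcdimproblem}.

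For claim (ii), at the stopping iteration $r$ the condition $\eta_r \leq \epsilon$ holds. Combined with Theorem \ref{theo:expec}, which identifies $L_U(f_{uv})$ with $c(u,v)$, and with the definition $\tilde{c}(u,v) = L_{S_r}(f_{uv})$, we conclude $|\tilde{c}(u,v) - c(u,v)| \leq \epsilon$ for every pair $(u,v) \in V^2$. For claim (iii), there are two sub-cases matching the two stopping conditions in step 3. If termination is triggered by the Rademacher bound, then $|\tilde{c}(u,v) - c(u,v)| \leq \epsilon$ combined with $c(u,v) \geq \epsilon$ yields $\tilde{c}(u,v) > 0$, hence $\tilde{t}(u,v) \geq 1$, meaning at least one sampled Dijkstra tree $T_x$ contains a shortest path from $u$ to $v$ in $S(T_x)$. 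If instead termination is triggered by the sample reaching the $\epsilon$-net size, Theorem \ref{teo:esample}(ii) together with Theorem \ref{theo:vcdimproblem} guarantees $S_r$ is an $\epsilon$-net, so every range $\tau_{uv}$ with $\Pr_\pi(\tau_{uv}) = c(u,v) \geq \epsilon$ is hit by at least one tree in $S_r$.

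In either sub-case, during processing of such a tree $T_x$, Algorithm \ref{alg:distances1} is invoked with starting vertex $x$ and by Lemma \ref{lema:lista_L} and Theorem \ref{theo:distances} sets $d[u][v] = d[v][u]$ to the exact distance $d(u,v)$, and by Corollary \ref{coro:paths} a corresponding shortest path can be recovered. The main obstacle I expect is to verify cleanly that the failure events across the two stopping conditions and across all iterations are handled by a single union bound of total mass $\delta$; in particular one must argue that, conditional on the ``good'' event $\sup_{f_{uv}} |L_{S_r}(f_{uv}) - c(u,v)| \leq \eta_r$ holding uniformly across iterations, both the estimation accuracy and the exact-path recovery follow deterministically from the graph-theoretic lemmas. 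No additional probabilistic slack is needed for claim (iii) in the $\epsilon$-net case, as Theorem \ref{teo:esample}(ii) already accounts for its own $1-\delta$ guarantee which we fold into the same schedule.
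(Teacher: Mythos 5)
Your proposal is correct and follows essentially the same route as the paper: the union bound over iterations with $\delta_i = \delta/2^i$, the identification $L_U(f_{uv}) = c(u,v)$ via Theorem \ref{theo:expec}, the contrapositive/$\epsilon$-net argument that $c(u,v) \geq \epsilon$ forces $\tau_{uv}$ to be hit by the sample, and the reduction of exact-distance recovery to Lemma \ref{lema:lista_L}, Theorem \ref{theo:distances} and Corollary \ref{coro:paths}. Your explicit separation of the two stopping conditions and the termination argument is a slightly cleaner organization of what the paper does, but not a different proof.
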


\begin{proof}
    Let $i \geq 1$ be an iteration of the loop in 5--15 and let $E_i$ be the event where $\sup_{(u,v) \in V^2}|\tilde{c}(u,v) - c(u,v)| > \eta_i$ in this iteration. We need the event $E_i$ occurring with probability at most $\delta$ for some iteration $i$. That is, we need 
    
    $$\Pr(\exists i \geq 1 \text{ s.t. } E_i \text{ occurs}) \leq \sum_{i=1}^\infty \Pr(E_i) \leq \delta$$

    where the inequality comes from union bound. Setting $\Pr(E_i) = \delta/2^i$, we have 
    
    $$\sum_{i=1}^\infty \Pr(E_i) = \delta \sum_{i=1}^\infty \frac{1}{2^i} = \delta.$$

    Let $S_r = \{T_1,\ldots,T_r\}$ be the final sample obtained after the iteration $r$ in the loop 5--15 where the stopping condition is satisfied, i.e., $\eta_r \leq \epsilon$. For each iteration $i$ in 5--15, where $1 \leq i \leq r$, consider that for each $x \in V$, there is one Dijkstra tree $T_x \in \mathcal{T}$, and hence, $|\mathcal{T}| = n$. A vertex $x \in V$ is sampled with probability $1/n$; therefore, $T_x$ is sampled with probability $1/n$ (line 8).
    The tree $T_x$ is traversed by Algorithm \ref{alg:distances1}, and the distances of every shortest path $p_{xy} \in S(T_x)$ are correctly and exactly computed, as shown in Lemma \ref{lema:lista_L}, Theorem \ref{theo:distances} and Corollary \ref{coro:paths}. 
    
    Let $p_{uv} \in S(T_x)$ be a shortest path from $u$ to $v$ in the sampled tree. At this point, $d(u,v)$ has the exact distance and the shortest path from $u$ to $v$ correctly computed. We will show by contraposition that if $c(u,v) \geq \epsilon$, then $|\tau_{uv} \cap S| \geq 1$, where $\tau_{uv} = \{T_x \in \mathcal{T} : p_{uv} \in S(T_x)\}$ and $c(u,v) = \mathbb{E}[f_{uv}(T_x)] = \Pr_\pi(\tau_{uv})$. 
    
    If $|\tau_{uv} \cap S| < 1$, then there is no tree $T'_x \in S$ that contains a shortest path from $u$ to $v$, and hence $\tilde{c}(u,v) = 0$. Then the value of $c(u,v)$ must be at most $\epsilon$ so that $|\tilde{c}(u,v) - c(u,v)| \leq \epsilon$ holds. Therefore, if $c(u,v) \geq \epsilon$, then $|\tau_{uv} \cap S| \geq 1$ and $S$ with size at most $\lceil \frac{c}{\epsilon}\lfloor\lg \textrm{Diam}_V(G) + \lg n + 2 \rfloor \ln \frac{1}{\epsilon} + \ln \frac{1}{\delta} \rceil$ is an $\epsilon$-net with probability at least $1-\delta$. The probability that $d(u,v)$ (as well as $d(v,u)$) is exactly computed is $ \geq 1 - \delta$ (Theorem \ref{teo:esample} (ii)).

    Now consider the computation of the estimation $\tilde{c}(u,v)$, for a pair $(u,v) \in V^2$. Let $p_{uv} \in S(G)$ be a shortest path from $u$ to $v$ and let $S' \subseteq S_r$ be the set of trees such that $p_{uv} \in S(T'_x)$, where $T'_x \in S'$. If the tree sampled in line 8
    of Algorithm \ref{alg:distances2} is in $S'$, then the value $\tilde{t}(u,v)$ has its value increased by 1 in line 5 of Algorithm \ref{alg:distances1}, so at the end or $r$-th iteration in loop 11--23, $\tilde{c}(u,v) = \frac{\tilde{t}(u,v)}{|S_r|} = \frac{1}{|S_r|}\sum_{T'_x \in S'} 1 = \frac{1}{|S_r|} \sum_{T_i \in S} \mathds{1}_{\tau_{uv}}(T_i) = \frac{1}{|S_r|} \sum_{T_i \in S} f_{uv}(T_i)$.
    
    Since $\eta_r \leq \epsilon$, $L_S(f_{uv}) = \tilde{c}(u,v)$ and $L_U(f_{uv}) = c(u,v)$ (Theorem \ref{theo:expec}) for all $(u,v) \in V^2$ and $f_{uv} \in \mathcal{F}$, then $\Pr(|\tilde{c}(u,v) - c(u,v)| \leq \epsilon) \geq 1-\delta$ (Theorem \ref{theo:rademacherbound}).    
\end{proof}


\begin{theorem}\label{theo:time}
    Given an undirected weighted graph $G=(V,E)$ with $n = |V|$ and a sample of size at most $r = \lceil \frac{c}{\epsilon}\lfloor\lg \textrm{Diam}_V(G) + \lg n + 2\rfloor \ln \frac{1}{\epsilon} + \ln \frac{1}{\delta} \rceil$, 
    Algorithm 2 has expected running time $\mathcal{O}(\lg n \max(m + n \log n, n \cdot \textrm{Diam}_V(G))$ for the computation of table $d$.
\end{theorem}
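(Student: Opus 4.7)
The plan is to bound the expected running time by (i) bounding the cost of a single iteration of the outer loop in lines 5--15 of Algorithm \ref{alg:distances2} in terms of the number of samples added at that iteration, and (ii) showing that the total number of sampled vertices across all iterations is $\mathcal{O}(\lg n)$. Because the sample schedule is geometric ($|S_i| = c^i |S_1|$ with $c > 1$), the cumulative work is within a constant factor of the work of the final iteration, so it suffices to bound the total cost when $|S_r|$ reaches its worst-case value.

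First I would analyze the cost incurred by a single sampled vertex $x$. Sampling $x$ and initializing $L$ takes $\mathcal{O}(1)$ time; the call \textsc{singleSourceShortestPaths}$(x)$ costs $\mathcal{O}(m + n \log n)$ time using a Fibonacci-heap Dijkstra and also produces $T_x$ and the array $dist$. The subsequent call to \textsc{updateShortestPaths} performs a DFS on $T_x$; by Lemma \ref{lema:lista_L}, when the recursion visits a vertex $i$, the list $L$ has length exactly $h(i)$, the depth of $i$ in $T_x$. Therefore the work performed at $i$ is $\mathcal{O}(h(i))$ (each update of $d$, $\tilde{t}$, $count$, and $\mathcal{V}$ is $\mathcal{O}(1)$ amortized, using $count$ to maintain $\mathcal{V}$ without repetition). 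Summing over all vertices, $\sum_{i \in V} h(i) \leq n \cdot \textrm{Diam}_V(T_x) \leq n \cdot \textrm{Diam}_V(G)$. Hence each sampled vertex contributes $\mathcal{O}(\max(m + n \log n,\, n \cdot \textrm{Diam}_V(G)))$ time, which matches the cost of the single per-iteration evaluation of the stopping condition: the function $w(s)$ of Theorem \ref{theo:massart} is a sum of at most $|\mathcal{V}| \leq n^2/2$ exponentials, but the number of distinct values actually inserted into $\mathcal{V}$ is $\mathcal{O}(n \cdot \textrm{Diam}_V(G))$ and its convex minimization is dominated by the sampling cost, and it is charged only once per iteration.

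Next I would bound the total number of samples. By Theorem \ref{teo:esample}(ii), applied to the range space $\mathcal{R} = (\mathcal{T}, \mathcal{I})$ whose VC-dimension is bounded in Theorem \ref{theo:vcdimproblem}, a sample of size $r = \lceil \frac{c}{\epsilon}\lfloor \lg \textrm{Diam}_V(G) + \lg n + 2 \rfloor \ln \frac{1}{\epsilon} + \ln \frac{1}{\delta}\rceil$ suffices to make the second termination clause in step 3 fire, so the outer loop stops no later than when $|S_i| \geq r$. For fixed constants $\epsilon$ and $\delta$, $r = \mathcal{O}(\lg n)$. Multiplying the per-sample cost by this bound on the total number of samples, and absorbing the geometric series of earlier iteration costs into a constant factor, yields the claimed expected running time $\mathcal{O}(\lg n \cdot \max(m + n \log n,\, n \cdot \textrm{Diam}_V(G)))$ for building the table $d$.

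The main obstacle I anticipate is the adaptive nature of the stopping rule: because termination depends on the random quantity $\eta$ computed from $w_s$ and the current sample, the number of iterations is itself a random variable. I would resolve this by using the worst-case fallback bound from Theorem \ref{teo:esample}(ii) to obtain an absolute (and hence expected) upper bound $r$ on the sample size, and then invoking the geometric-schedule telescoping to reduce the analysis to a single iteration of size $r$; a more refined expected-case bound would require modeling the distribution of $\tilde{R}_r$, which is not needed for the stated result.
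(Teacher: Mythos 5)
Your proposal is correct and follows essentially the same route as the paper's proof: bound the per-sample cost by $\mathcal{O}(\max(m + n\log n,\, n\cdot\textrm{Diam}_V(G)))$ (Dijkstra plus the DFS whose work at each vertex is proportional to its depth), cap the total number of samples by the $\epsilon$-net fallback bound $r = \mathcal{O}(\lg n)$ from Theorems \ref{teo:esample}(ii) and \ref{theo:vcdimproblem}, and multiply. Your additional remarks on the geometric-schedule telescoping and the adaptive stopping rule are elaborations of the same argument rather than a different approach (the paper simply notes the loop executes at most $r$ Dijkstra calls in total and that the convex minimization is linear in the sample size).
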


\begin{proof}
    We sample the vertex $x \in V$ in line 8 in linear time. 
    Line 11 takes time $\mathcal{O}(n \cdot \textrm{Diam}_V(G))$ because Algorithm \ref{alg:distances1} makes $O(n)$ recursive calls -- one for each vertex of the canonical tree -- and the loop execution in line 1 in Algorithm \ref{alg:distances1} takes time $\mathcal{O}(\textrm{Diam}_V(G))$ (since in the worst case, the Dijkstra tree $T_x$ cannot be deeper than the diameter of the graph). Line 12 is executed by an algorithm that is linear in the size of the sample \cite{nlopt}.
    The loop in lines 5--15 runs at most $r$ times and the Dijkstra algorithm which is executed in line 9 has running time $\mathcal{O}(m + n \log n)$. The operations of insertion, deletion and search in tables $\tilde{t}$ and $d$ take time $\mathcal{O}(1)$ on average. So, the total expected running time of Algorithm \ref{alg:distances2} is $\mathcal{O}(r\max(m + n \log n, n \cdot \textrm{Diam}_V(G))) = \mathcal{O}(\lg n \max(m + n \log n, n \cdot \textrm{Diam}_V(G)))$.
\end{proof}

\begin{corollary}\label{corol:time}
    Given an undirected weighted graph $G=(V,E)$ with $n = |V|$ and a sample of size $r = \lceil \frac{c}{\epsilon^2}\lfloor\lg \textrm{Diam}_V(G) + \lg n + 2\rfloor \ln \frac{1}{\delta} \rceil$, 
    Algorithm 2 has running time $\mathcal{O}(\lg n \max(m + n \log n, n\cdot\textrm{Diam}_V(G)))$ for the computation of table $\tilde{c}$.
\end{corollary}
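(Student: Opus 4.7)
The plan is to mirror the running-time analysis already carried out for Theorem \ref{theo:time}, adjusting only for the fact that the sample-size bound in the statement now comes from part (i) of Theorem \ref{teo:esample} (the $\epsilon$-representative bound) rather than part (ii) (the $\epsilon$-net bound). First I would observe that plugging the VC-dimension bound from Theorem \ref{theo:vcdimproblem} into Theorem \ref{teo:esample}(i) yields exactly the quantity
$$r = \left\lceil \frac{c}{\epsilon^2}\lfloor\lg \textrm{Diam}_V(G) + \lg n + 2\rfloor \ln \tfrac{1}{\delta} \right\rceil,$$
and that since $\epsilon,\delta$ are fixed constants and $\textrm{Diam}_V(G) \leq n$, this $r$ is still $\mathcal{O}(\lg n)$, i.e.\ the same asymptotic order as the $\epsilon$-net bound used in Theorem \ref{theo:time}.

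Next I would argue that the per-iteration cost is unchanged by the switch of bounds. Sampling a vertex $x$ in line 8 of Algorithm \ref{alg:distances2} is linear, the Dijkstra call in line 9 runs in $\mathcal{O}(m + n \log n)$, and the modified DFS of Algorithm \ref{alg:distances1} performs $\mathcal{O}(n \cdot \textrm{Diam}_V(G))$ updates to the tables $d$, $\tilde{t}$, $\mathcal{V}$, and $count$, each in expected $\mathcal{O}(1)$ time via hashed access. So each iteration costs $\mathcal{O}(\max(m + n \log n, n \cdot \textrm{Diam}_V(G)))$ in expectation, exactly as in the proof of Theorem \ref{theo:time}.

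The only extra work specific to producing $\tilde{c}$ is the final normalization dividing each nonzero $\tilde{t}[u][v]$ by $|S_r|$. The number of such entries is at most $\min(n \cdot \textrm{Diam}_V(G) \lg n, n^2)$, which is absorbed by the $r \cdot n \cdot \textrm{Diam}_V(G)$ term already accumulated across the iterations. Multiplying the sample size by the per-iteration cost gives
$$\mathcal{O}\bigl(r \cdot \max(m + n \log n, n \cdot \textrm{Diam}_V(G))\bigr) = \mathcal{O}\bigl(\lg n \cdot \max(m + n \log n, n \cdot \textrm{Diam}_V(G))\bigr),$$
as claimed. The one point I would be careful about is the change from $(1/\epsilon)\ln(1/\epsilon)$ in the $\epsilon$-net bound to $1/\epsilon^2$ in the $\epsilon$-representative bound: this alters the hidden constant but not the asymptotic order in $n$, since $\epsilon$ is assumed constant, so no additional factor of $n$ or $\log n$ is introduced.
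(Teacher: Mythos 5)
Your proposal is correct and follows essentially the same route the paper takes: the corollary is left without an explicit proof precisely because it is the argument of Theorem \ref{theo:time} verbatim, with the $\epsilon$-net sample bound replaced by the $\epsilon$-representative bound of Theorem \ref{teo:esample}(i), which is still $\mathcal{O}(\lg n)$ for constant $\epsilon,\delta$. The only nit is that Theorem \ref{teo:esample}(i) literally gives $\frac{c}{\epsilon^2}(k + \ln\frac{1}{\delta})$ rather than the product form in the corollary's statement, so ``yields exactly'' is a slight overstatement, but this does not affect the asymptotic conclusion.
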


\section{Concluding remarks}

In this paper we presented a $\mathcal{O}(\lg n \max(m + n \log n, n \cdot \textrm{Diam}_V(G)))$ expected running time algorithm that 
outputs a shortest path between every pair of vertices $(u,v)$ with probability at least $1-\delta$ whenever the shortest path centrality of $(u,v)$ is at least $\epsilon$, for fixed $0 < \epsilon,\delta < 1$. We note that this is particularly interesting in sparse graphs with logarithmic diameter, such as power law graphs, which are common in practical applications, since for those graphs the complexity drops to $\mathcal{O}(n \log^2 n)$. So, in an application where one might be interested only in computing ``central'' shortest paths the algorithm is rather efficient.




Finally, an open question that we are particularly interested is the connection between $\epsilon$ and $n$ for specific input distributions. For the general case, trivially setting $\epsilon = 2/n$, by Theorem 1, we have a guarantee that every shortest path in $G$ is computed with probability $1-\delta$, but that would increase the algorithm complexity to $\tilde{O}(n^3)$. We wonder if this fact may be related to the assumption that the APSP may not admit a strictly subcubic algorithm. However, if we assume that the graph is sampled from a given probability distribution, a strictly subcubic randomized algorithm for the (original) APSP may be achievable.


%
%
%
\bibliography{lipics-v2019-sample-article}

\end{document}